\newtheorem{theorem}{Theorem}[section]
\newtheorem{corollary}{Corollary}[theorem]
\newtheorem{lemma}{Lemma}[section]
\newtheorem{definition}{Definition}[section]
\theoremstyle{remark}
\newtheorem{remark}{Remark}[section]
\begin{document}

\title{Time consistency and moving horizons in risk measures}
\author{Samuel N. Cohen \\ University of Adelaide \\ samuel.cohen@adelaide.edu.au 
\and Robert J. Elliott\thanks{Robert Elliott wishes to thank the Australian Research Council for
support.} \\ University of Adelaide and University of Calgary\\ relliott@ucalgary.ca}

\maketitle
\begin{abstract}
We consider portfolio selection when decisions based on a dynamic risk measure are affected by the use of a moving horizon, and the possible inconsistencies that this creates. By giving a formal treatment of time consistency which is independent of Bellman's equations, we show that there is a new sense in which these decisions can be seen as consistent.
\end{abstract}

\section{Introduction}
Risk is an active area of study. The management of uncertain outcomes, and decision making in this context, is of considerable importance. Much recent research has focussed around properties of `coherent risk measures', as first discussed in \cite{Artzner1999}, and `convex risk measures', as defined by \cite{Follmer2002} and \cite{Frittelli2002}. These are functionals $\rho: L^1(\mathcal{F}_T)\to \mathbb{R}$, where $T$ is some future time, and $L^1(\mathcal{F}_T)$ is the space of integrable $\mathcal{F}_T$-measurable random variables. In the convex case, it is assumed that these functionals satisfy three assumptions, namely:
 \begin{enumerate}
 \item Monotonicity: $X\geq Y, \mathbb{P}\text{-a.s.} \Rightarrow \rho(X)\leq \rho(Y)$,
 \item Translation invariance: $\rho(X+c) = \rho(X)-c$ for all $c\in\mathbb{R}$,
 \item Convexity: $\rho(\lambda X + (1-\lambda)Y) \leq \lambda\rho(X)+(1-\lambda)\rho(Y)$ for all $\lambda\in[0,1]$.
 \end{enumerate}
  
 One significant flaw with these risk measures is that they are essentially static -- they consider only one random outcome, and do not model the development of information through time. Simply applying these risk measures to a multiple-period problem is insufficient, as there is no guarantee that they will lead to time-consistent decision making. In particular, there is no guarantee that Bellman's principle will be satisifed. Concrete examples of this can be found in \cite{Kang2006} and \cite{Artzner2007}.
 
 More recently, Artzner et al. \cite{Artzner2007} discussed how a particular expression of Bellman's principle is equivalent to a recursivity property of the risk measures, namely if $\rho_t(X)$ denotes the risk of $X$ as considered at time $t$, then for any $s<t$, we have $\rho_s(X) = \rho_s(-\rho_t(X))$.  In \cite{Kloppel2007}, an equivalent property, (given translation invariance), is considered. Specifically, in \cite{Kloppel2007} a type of inter-temporal monotonicity is assumed, that is, for any times $s<t$, $\rho_t(X)\geq \rho_t(Y)$ $\mathbb{P}$-a.s. implies $\rho_s(X)\geq \rho_s(Y)$ $\mathbb{P}$-a.s. In this paper, we show that a version of this monotonicity condition is equivalent to a general form of Bellman's principle; see Theorem \ref{thm:equivtimeconsistent}. 
 
  Much has been written on \emph{dynamic risk measures}, that is, risk measures where a recursivity property is satisfied. See, for example, \cite{Rosazza2006}, \cite{Barrieu2004}, \cite{Delbaen2008}. Similarly, a theory of `time-consistent nonlinear expectations' has been developed. See particularly \cite{Peng1997} and the references therein. These satisfy assumptions very similar to those of dynamic risk measures, the main difference being a sign change in each of the three assumptions above. To construct these functionals, a common tool is the theory of Backward Stochastic Differential Equations (BSDEs), and it is known that all nonlinear  expectations, (satisfying some constraints), can be expressed as solutions of BSDEs, (see \cite{Coquet2002} and \cite{Hu2008} in the continuous time case, \cite{Cohen2008c} and \cite{Cohen2009a} in the discrete time case). 
 
To apply these methods, one must typically fix a distant point $T$ in the future, (possibly infinitely distant), at which all payoffs will be realised. Alternatively, as for example in \cite{Peng2004} or \cite{Bion-Nadal2008}, one can generalise the risk measures to operators $\rho_{\sigma, \tau}:L^\infty(\mathcal{F}_\sigma) \to L^\infty(\mathcal{F}_\tau)$ where $\sigma\leq \tau$ are stopping times. If we assume that $\tau\leq T$ for some fixed $T$, we can then replace $\tau$ with $T$ throughout, by the property $\rho_{\sigma,\tau}(X)=\rho_{\sigma,T}(X)$ for all $X\in L^\infty(\mathcal{F}_\tau)$. 

In many investment applications, predicting even the distribution of extremely long-term behaviour is almost impossible. One might hope to use a shorter-dated moving horizon, where the portfolio value at some fixed time into the future, (say, one-year from the present), is considered, but this horizon is allowed to move forward as time progresses. That is, the risk is calculated based on the portfolio value a short time in the future, rather than at the terminal time $T$. Hence, if $V_{t}$ is the portfolio value at time $t$, our risk at time $t$ is measured by $\rho_t(V_{t+m})$, where $m$ is the horizon distance. As $V_{s+m}\neq V_{t+m}$ in general, it is clear that the recursivity properties imply no relationship between $\rho_s(V_{s+m})$ and $\rho_s(-\rho_t(V_{t+m}))$, and an approach to time-consistency based on the recursivity property of $\rho$ is insufficient. In this paper, we discuss what consistency properties remain under such a regime.

These and similar questions have also led to the forward-performance approach of Musiela and Zariphopolou and others (for example, \cite{Musiela2009}, \cite{Musiela2010}, \cite{Zitkovic2009}). Here a self-generation property is used to ensure that, under optimal behaviour, $\rho_s(V_{t})$ is independent of $t$ for $t\geq s$, and so no problem arises. However, this requires that $\rho$ is of a very special type, and depends on the model used of the market. We here consider the consequences of simply assuming that $\rho$ satisfies the standard recursivity property.

This work is also motivated by applications in economic regulation. In many risk management settings the `risk' is calculated over some finite horizon, to ensure it does not exceed certain bounds. For example, in the Basel II Banking accords, regulators calculate a ten-day 99\%-value-at-risk for market risk, and a one-year 99.9\%-value-at-risk for credit and operational risks. See \cite{Hull2009} for more details. Even if these risks are calculated using a dynamic risk measure, (which, as is well known, value-at-risk is not), the moving horizon will introduce inconsistencies into the analysis.

Time inconsistent problems have been classically studied in economics, for example, in the works of \cite{Goldman1980}, \cite{Peleg1973}, \cite{Pollak1968} and   \cite{Strotz1955} or more recently in \cite{Ekeland2008} and \cite{Bjork2009}. The approach used in these papers is based on solving an intertemporal game. In particular, the `optimal' strategy is selected subject to the requirement that there will be no benefit from deviating from it at any point in the future. Clearly, determining such strategies generally requires explicit consideration of future behaviour. This approach to choosing strategies is not of key interest for the moving horizon problem. The reason for this is that if one were to consider the actions one will take tomorrow, one would have to consider behaviour up to tomorrow's horizon, that is, one day further than the horizon considered today. In this case, one may as well consider this more distant horizon directly. A recursive argument then shows that this would result in the horizon being extended into the distant future, and therefore, the moving horizon problem would essentially disappear.

For this reason, we consider the situation where decisions are made in a completely na{\"\i}ve manner, without regard for future behaviour. Our question is whether this approach will yield time-consistent policies, which will clearly depend on those policies available and the values assigned to them. We shall show that, for a simple dynamic investment problem, decision making with a moving horizon is not time-consistent in general. We shall then show that there exists a modified version of time-consistency which is satisfied, given certain assumptions on the possible policy space.

For simplicity, we shall work in a discrete time setting. The continuous time setting is conceptually similar, and we expect that many of the results obtained will carry over, with appropriate technical modification. However, there are significant difficulties in working with moving horizons in continuous time, some of which are explored in \cite{Choulli2009}.

We proceed by first considering the fundamental notions of time-consistency, and derive an appropriate variant for our problem. In Section \ref{sec:movhorizintro} we then formally introduce the particular problem under consideration, and show that the classical requirements of time-consistency are not satisfied; however, the requirements for our modified concept are.

\section{Time-consistency and policies}

We now present a general definition of time-consistency, which is essentially a formalisation of Bellman's Principle of Optimality. While taking Bellman's Principle as a useful basis for a definition of time consistency, we shall not assume that the value function is the solution of Bellman's equation and, hence, the problems considered may not be time-consistent.

In general, we assume that there is a set of allowable policies $\mathcal{U}$, which are adapted processes taking values in some metric space $\mathbb{U}$. They are selected to optimise some value function $\mathcal{V}$, which is in general a family of maps
\[\mathcal{V}_t:\mathcal{U}\to L^1(\mathcal{F}_t),\quad t\in\{0,1,...T\}.\]
For simplicity, we take higher values of $\mathcal{V}$ as better than lower. 

\begin{definition}\label{def:viability}

Let $X\in\mathcal{U}$ be a policy. We define the \emph{conditional policy space} at $t$ given past policy $X$,
\begin{equation}\label{eq:conditionalpolicyspace}
\mathcal{U}|_t^{X}=\{X'\in \mathcal{U}: X_s'=X_s \quad \mathbb{P}-a.s. \text{ for all } s<t\}.
\end{equation}
Note that $\mathcal{U}|_0^{X} = \mathcal{U}$ for all $X$.

Let $\{X^t\}$ be a collection containing a policy choice $X^t\in\mathcal{U}$ for each time $t\leq T$. Let $\hat X$ denote that policy which is eventually chosen, that is $\hat X_u:= X^u_u$. Then this policy choice is \emph{viable} if, for every $s<t$,
\[X^t\in \mathcal{U}|_s^{X^s}\]
or equivalently
\[X^t\in \mathcal{U}|_{t}^{\hat X} = \bigcap_{s<t} \mathcal{U}|_{s}^{X^s} = \mathcal{U}|_t^{X^{t-1}}.\]
\end{definition}

To ensure that in different states of the world different decisions can be independently made, we have the following property.

\begin{definition}
We say the conditional policy space satisfies the \emph{pasting property} if for any past policy $\hat X\in\mathcal{U}$,
\begin{equation}\label{eq:pastingproperty}
I_A X+ I_{A^c}X' \in \mathcal{U}|_t^{\hat X} \quad \text{for all}\quad X,X' \in \mathcal{U}|_t^{\hat X}, A\in\mathcal{F}_t.
\end{equation}
We say the value function $\mathcal{V}$ satisfies the zero-one law if 
\begin{equation}\label{eq:zeroonelaw}
\mathcal{V}_t(I_A X + I_{A^c}X') = I_A \mathcal{V}_t(X) + I_{A^c}\mathcal{V}_t(X') 
\end{equation}
for all $X,X'\in\mathcal{U}|_t^{\hat X}$, $A\in\mathcal{F}_t$.
\end{definition}

 Intuitively, we think of $X^t$ as the policy which an investor intends to pursue, when making a selection at time $t$. A collection being viable ensures that $\hat X$, the policy that is finally chosen,  does not involve an investor attempting to change their past actions at any time.
 
 Note that this definition requires that the past policy is matched both in the observed past and in all possible other pasts (that is, for all $\omega$). This requirement is needed to ensure that switching, at time $t$, from one policy $X$ to another in $\mathcal{U}|_t^{\hat X}$ results in a policy which is in $\mathcal{U}$. 
 
  The following result ensures that $\hat X$ is in fact a policy, that is, $\hat X \in \mathcal{U}$.
\begin{lemma}\label{lem:XtnearT}
If $\{X^t\}$ is a viable policy choice, then $X^T=\hat X$. Hence $\hat X \in \mathcal{U}$.
\end{lemma}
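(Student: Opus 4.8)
The plan is to prove the concrete statement $\hat X = X^T$; the assertion $\hat X \in \mathcal{U}$ is then immediate, since by hypothesis the collection contains $X^t \in \mathcal{U}$ for each $t \leq T$, in particular $X^T \in \mathcal{U}$.

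Recall that $\hat X$ is defined coordinatewise by $\hat X_u := X^u_u$, so proving $\hat X = X^T$ reduces to checking $X^T_u = X^u_u$ $\mathbb{P}$-a.s.\ for every $u \in \{0,1,\dots,T\}$. For $u = T$ this holds by definition, since $\hat X_T = X^T_T$. For $u < T$ I would invoke viability: by Definition \ref{def:viability}, viability of $\{X^t\}$ is equivalent to $X^t \in \mathcal{U}|_t^{\hat X}$ for all $t$, and taking $t = T$ and unwinding the definition \eqref{eq:conditionalpolicyspace} of the conditional policy space gives precisely $X^T_s = \hat X_s$ $\mathbb{P}$-a.s.\ for all $s < T$. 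Combining the two ranges, $X^T_u = \hat X_u$ $\mathbb{P}$-a.s.\ for all $u \leq T$, i.e.\ $X^T = \hat X \in \mathcal{U}$.

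I do not expect a genuine obstacle here: the argument is essentially a bookkeeping exercise in the definitions, and the only point requiring care is the index convention together with the passage between the two equivalent formulations of viability. If one prefers to argue from the pairwise form $X^t \in \mathcal{U}|_s^{X^s}$ (for $s < t$) rather than from $X^t \in \mathcal{U}|_t^{\hat X}$ directly, one should first record the equivalence $\mathcal{U}|_t^{\hat X} = \bigcap_{s<t}\mathcal{U}|_s^{X^s} = \mathcal{U}|_t^{X^{t-1}}$ noted in Definition \ref{def:viability} --- that is, that the nested agreement conditions force $X^t$ to coincide with $\hat X$ on every coordinate before $t$ --- after which the lemma follows exactly as above. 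A clean alternative is a short forward induction on $t$, the inductive claim being that $X^t$ and $\hat X$ agree on $\{0,\dots,t-1\}$ and the inductive step being a single application of viability; specialising to $t = T$ and appending the trivial coordinate $u = T$ completes the proof.
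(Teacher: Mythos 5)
Your proof is correct: taking $t=T$ in the viability condition $X^t\in\mathcal{U}|_t^{\hat X}$ gives $X^T_s=\hat X_s$ a.s.\ for all $s<T$, and $\hat X_T:=X^T_T$ supplies the last coordinate, so $\hat X=X^T\in\mathcal{U}$. The paper states this lemma without proof, treating it as immediate from Definition \ref{def:viability}, and your argument is exactly the intended unwinding of that definition.
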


We now give simple conditions under which our problem has a solution. Our main focus is not on deriving conditions for the solution to exist, but on exploring the implications of the solution for time-consistency; therefore, the restrictive nature of these conditions is not a major concern. It is easy to see that our main results all have appropriate modifications to more general settings whenever the existence of optimal policies is given.

\begin{definition}
In general, we shall say that our problem is \emph{standard} if 
\begin{itemize}
\item $\mathcal{U}$ is a compact subset of adapted processes on $\mathbb{U}$, with induced metric $d(X, X') = \sum_s E[d_{\mathbb{U}}(X_s, X_s')]$. For any $X, X'\in\mathcal{U}$, this metric satisfies
\[d(X, X') = 0 \quad\text{ if and only if }\quad X_s= X'_s \quad \mathbb{P}-a.s. \text{ for all } s.\]
Note that we do not assume that $\mathbb{P}$-almost sure convergence is metrizable. A simple example is a compact subset of adapted processes taking values $X_t\in L^1(\mathbb{R}^N;\mathcal{F}_t)$ for all $t$.

\item For all $t$, the value function $\mathcal{V}_t:\mathcal{U}\to L^1(\mathcal{F}_t)$ is lower semicontinuous under the metric topology, that is, if $X^n\to X^\infty$ and $\mathcal{V}_t(X^n)\leq \mathcal{V}_t(X^{n+1})$ $\mathbb{P}$-a.s. for all $n$, then $\lim_{n\to\infty} \mathcal{V}_t(X^n) = \mathcal{V}_t(X^\infty)$.

\item For all $t$, $\mathcal{U}|_t^{\hat X}$ as defined by (\ref{eq:conditionalpolicyspace}) satisfies the pasting property (\ref{eq:pastingproperty}) and $\mathcal{V}_t$ satisfies the zero-one law (\ref{eq:zeroonelaw}).
\end{itemize}
\end{definition}

\begin{lemma} \label{lemma:Ucondcompact}
If $\mathcal{U}$ is compact, then $\mathcal{U}|_t^{\hat X}$ is compact for all times $t$ and all policies $\hat X$. 
\end{lemma}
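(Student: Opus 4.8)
The plan is to exploit the fact that $(\mathcal{U},d)$ is a compact metric space, so that a subset is compact precisely when it is closed; it therefore suffices to show that $\mathcal{U}|_t^{\hat X}$ is a closed subset of $\mathcal{U}$. (If $\mathcal{U}|_t^{\hat X}=\emptyset$ the claim is trivial, so I would fix some reference element $Z\in\mathcal{U}|_t^{\hat X}$; note $Z_s=\hat X_s$ $\mathbb{P}$-a.s.\ for $s<t$.)

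The key observation is that for each fixed time $s$ the functional $\phi_s:\mathcal{U}\to[0,\infty)$ given by $\phi_s(X')=E[d_{\mathbb{U}}(X'_s,Z_s)]$ is Lipschitz, hence continuous, for the metric $d$: by the triangle inequality in $\mathbb{U}$ and monotonicity of the expectation,
\[|\phi_s(X')-\phi_s(Y')|\leq E[d_{\mathbb{U}}(X'_s,Y'_s)]\leq d(X',Y').\]
Since $d(X',X'')=0$ if and only if $X'_u=X''_u$ $\mathbb{P}$-a.s.\ for all $u$, we have $\phi_s(X')=0$ if and only if $X'_s=Z_s=\hat X_s$ $\mathbb{P}$-a.s. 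Consequently
\[\mathcal{U}|_t^{\hat X}=\bigcap_{s<t}\phi_s^{-1}(\{0\}),\]
which is a finite intersection of closed sets and hence closed. Equivalently, and perhaps more transparently, I would argue sequentially: if $X^n\in\mathcal{U}|_t^{\hat X}$ and $X^n\to X^\infty$ in $(\mathcal{U},d)$, then $\phi_s(X^n)\to\phi_s(X^\infty)$ for every $s<t$; but $\phi_s(X^n)=0$ for all $n$, so $\phi_s(X^\infty)=0$, i.e.\ $X^\infty_s=\hat X_s$ $\mathbb{P}$-a.s.\ for every $s<t$, whence $X^\infty\in\mathcal{U}|_t^{\hat X}$. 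Being a closed subset of the compact metric space $\mathcal{U}$, the set $\mathcal{U}|_t^{\hat X}$ is compact.

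There is no serious obstacle here; the only point requiring a little care is the one the authors emphasise, namely that $\mathbb{P}$-a.s.\ convergence is \emph{not} assumed metrizable, so one must not try to extract $\mathbb{P}$-a.s.-convergent subsequences and instead argue entirely in terms of the metric $d$. Passing through the continuous functionals $\phi_s$ avoids this issue cleanly, and neither the pasting property nor the zero-one law is needed for this lemma.
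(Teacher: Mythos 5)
Your proof is correct, and it takes a cleaner route than the paper's. The paper also argues via sequential compactness, but to show that the limit of a convergent subsequence lies in $\mathcal{U}|_t^{\hat X}$ it invokes the pasting property together with an appeal to modifications being indistinguishable in discrete time. You instead observe that each constraint $X'_s=\hat X_s$ $\mathbb{P}$-a.s.\ is the zero set of the $1$-Lipschitz functional $\phi_s(X')=E[d_{\mathbb{U}}(X'_s,\hat X_s)]$, so that $\mathcal{U}|_t^{\hat X}$ is a finite intersection of closed subsets of the compact metric space $(\mathcal{U},d)$, hence compact. This has two advantages: it makes explicit why the limit satisfies the a.s.\ constraints (the per-time marginal distance is dominated by $d$ and therefore passes to the limit), and it shows that the pasting property is genuinely not needed for this lemma, which sharpens the logical structure since the paper's ``standard problem'' definition lists compactness and the pasting property as separate hypotheses. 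Your parenthetical remark about nonemptiness is harmless but unnecessary here, since the definition of the conditional policy space takes $\hat X\in\mathcal{U}$, so $\hat X$ itself always belongs to $\mathcal{U}|_t^{\hat X}$ and you may simply take $Z=\hat X$.
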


\begin{proof}

For any sequence $\{X^n\}$ in $\mathcal{U}|_t^{\hat X}$, we know that $X^n_u = \hat X_u$ for all $u< t$. As $\mathcal{U}$ is compact, there exists a convergent subsequence of $X^n$. This subsequence has a modification in $\mathcal{U}|_t^{\hat X}$ by the pasting property, and as we are in discrete time these modifications are indistinguishable. Therefore, $\mathcal{U}|_t^{\hat X}$ is sequentially compact and hence compact.
\end{proof}

\begin{theorem} \label{thm:Optimalpolicyexists}
For a standard problem, for all $t$ and any past policy $\hat X$, there exists a policy $X^t\in\mathcal{U}|_t^{\hat X}$ such that for any $X\in\mathcal{U}|_t^{\hat X}$
\[\mathcal{V}_t(X^t)\geq \mathcal{V}_t(X),\qquad \text{a.s.}\]

We shall say that the policy $X^t$ given by Theorem \ref{thm:Optimalpolicyexists} \emph{uniformly maximises} $\mathcal{V}_t$ on $\mathcal{U}|_t^{\hat X}$.
\end{theorem}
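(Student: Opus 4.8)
The plan is to run the classical essential-supremum argument, using the pasting property and zero-one law to get a lattice (``directed upwards'') structure on the attainable values, and then using compactness together with the monotone form of lower semicontinuity to realise the essential supremum by an honest policy. Throughout, fix $t$ and the past policy $\hat X$, and write $\mathcal{C}:=\mathcal{U}|_t^{\hat X}$ (which is nonempty, since $\hat X\in\mathcal{C}$, and compact by Lemma \ref{lemma:Ucondcompact}).

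First I would show that the family of random variables $\Phi:=\{\mathcal{V}_t(X):X\in\mathcal{C}\}\subset L^1(\mathcal{F}_t)$ is directed upwards. Given $X,X'\in\mathcal{C}$, the set $A:=\{\mathcal{V}_t(X)\geq\mathcal{V}_t(X')\}$ lies in $\mathcal{F}_t$ because $\mathcal{V}_t(X)$ and $\mathcal{V}_t(X')$ are $\mathcal{F}_t$-measurable. By the pasting property, $X'':=I_A X+I_{A^c}X'\in\mathcal{C}$, and by the zero-one law $\mathcal{V}_t(X'')=I_A\mathcal{V}_t(X)+I_{A^c}\mathcal{V}_t(X')=\mathcal{V}_t(X)\vee\mathcal{V}_t(X')$ a.s., so $\mathcal{V}_t(X'')$ dominates both. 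Iterating this pasting over an arbitrary maximising sequence yields, by the standard lemma on essential suprema of upward-directed families, a sequence $\{Y^n\}\subset\mathcal{C}$ with $\mathcal{V}_t(Y^n)\leq\mathcal{V}_t(Y^{n+1})$ a.s.\ for every $n$ and $\mathcal{V}_t(Y^n)\uparrow\Gamma:=\operatorname*{ess\,sup}_{X\in\mathcal{C}}\mathcal{V}_t(X)$ a.s.

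Now the topology enters. By compactness of $\mathcal{C}$, the sequence $\{Y^n\}$ has a subsequence $\{Y^{n_k}\}$ converging, in the metric $d$, to some $X^t\in\mathcal{C}$. Being a subsequence of a sequence whose $\mathcal{V}_t$-values are a.s.\ increasing, it still satisfies $\mathcal{V}_t(Y^{n_k})\leq\mathcal{V}_t(Y^{n_{k+1}})$ a.s.\ and $\mathcal{V}_t(Y^{n_k})\uparrow\Gamma$ a.s. The lower semicontinuity hypothesis is tailored exactly to such a monotone convergent sequence, so it gives $\lim_k\mathcal{V}_t(Y^{n_k})=\mathcal{V}_t(X^t)$, whence $\mathcal{V}_t(X^t)=\Gamma$ a.s. Finally, for any $X\in\mathcal{C}$ we have $\mathcal{V}_t(X)\leq\Gamma=\mathcal{V}_t(X^t)$ a.s.\ by the definition of the essential supremum, which is precisely the assertion that $X^t$ uniformly maximises $\mathcal{V}_t$ on $\mathcal{C}$.

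I do not expect a genuine obstacle here, but the point requiring care is the interface between the order-theoretic and topological parts: the lower semicontinuity assumption is granted only along monotone sequences, so it is essential that the essential-supremum construction produces a sequence with a.s.\ increasing values and that this monotonicity is preserved when passing to the $d$-convergent subsequence. One also notes that the a priori finiteness of $\Gamma$ is not needed: since $\mathcal{V}_t(X^t)\in L^1(\mathcal{F}_t)$ and $\mathcal{V}_t(X^t)=\Gamma$ a.s., $\Gamma$ is automatically finite a.s. The remaining verifications---that $I_A X+I_{A^c}X'$ is an admissible element of $\mathcal{C}$ and that the zero-one law applies to it---are exactly the standing hypotheses defining a standard problem.
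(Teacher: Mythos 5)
Your proposal is correct and is precisely the classical essential-supremum argument (upward-directedness via pasting and the zero-one law, a monotone maximising sequence, compactness to extract a $d$-convergent subsequence, and the monotone form of lower semicontinuity to identify the limit) that the paper's one-line proof simply invokes as ``classical''. Nothing to add.
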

\begin{proof}
This is classical result from the assumptions of a standard problem.
\end{proof}

\begin{definition} \label{def:timeconsistent}
Let $\{X^t\}$ be a viable collection containing a policy choice for each time $t$. This generates a realised policy $\hat X$, defined by $\hat X_t = X^t_t=X^T_t$ for all $t$. This collection is called \emph{optimal}  if
\begin{enumerate}[(i)]
\item for any $t$, the policy $X^t$ uniformly maximises $\mathcal{V}_t(X)$ for $X\in\mathcal{U}|_{t}^{\hat X}$,
\end{enumerate}
and \emph{time consistent} if
\begin{enumerate}[(i)]
\setcounter{enumi}{1}
\item for any time $t$, we have
\[\mathcal{V}_t(X^t) = \mathcal{V}_t(\hat X), \qquad \mathbb{P}\text{-a.s.}\]
\end{enumerate}
\end{definition}

\begin{remark}

Unlike most interpretations of Bellman's principle, this definition is `forward looking', and does not, in general, admit the use of dynamic programming as a means of finding optimal policies.

Definition \ref{def:timeconsistent} directly allows initial behaviour to affect future behaviour in complex ways. This idea is embedded in the assumption of viability and the freedom to specify which policies are in $\mathcal{U}$. That is, this approach allows precommitment contracts and similar devices to be modelled, through restrictions on $\mathcal{U}$. The definition, therefore, looks for consistent behaviour \emph{contingent on what has already been done}. 

Essentially, if we choose an optimal policy today, we simply need to check that, in the future, we shall continue to follow a policy which we consider equivalent to the optimal choice today. In some sense, a policy is time consistent if it leads to a `\emph{commitment to previous decisions}'.

This definition has the distinct disadvantage of not requiring us to ensure that our decisions today will make us happy in the future. The policies selected as optimal in the future only need to lie in $\mathcal{U}|_t^{\hat X}$, that is, in the space of policies we have left ourselves to choose from.  A simple example of this is when the space $\mathcal{U}$ consists only of `buy-and-hold' policies. Here we make a decision at time zero, and are unable to modify it at any point in the future -- the policy $X^0$ chosen at time zero is the only policy in $\mathcal{U}|_{t}^{X^0}=\mathcal{U}|_{t}^{\hat X}$ for all $t>0$. Hence this decision is time consistent, as no deviation from the initial plan is permitted.
\end{remark}

\begin{remark}\label{rem:optiminequality}
It is important to note that, if $\{X^t\}$ is optimal, for any $t$, as $X^t$ maximises $\mathcal{V}_t$ and
\[\hat X \in \mathcal{U}|_t^{\hat X},\]
Property (ii) of Definition \ref{def:timeconsistent} can only ever fail through a future decision appearing sub-optimal today, that is, it is always true that
\[\mathcal{V}_t(X^t) \geq \mathcal{V}_t(\hat X), \qquad \mathbb{P}\text{-a.s.}\]
\end{remark}

\subsection{Bellman's principle and time-consistency}
We now give a relation between time consistency in the sense of Definition \ref{def:timeconsistent} and a type of intertemporal monotonicity for the value function. For simplicity, we write $[s,t[$ for the discrete collection of times $\{s,s+1,...,t-1\}$, and similarly for $]s,t]$.

This result is closely related to results of Artzner et al. \cite{Artzner2007}. Our approach differs from theirs mainly in the attention given to the space of possible policies $\mathcal{U}$.

\begin{theorem} \label{thm:equivtimeconsistent}
The following statements are equivalent
\begin{enumerate}[(i)]
\item The value $\mathcal{V}$ is such that every optimal policy choice is also time-consistent, for every initial compact policy set $\mathcal{U}$ of adapted processes in $\mathbb{U}$.

\item For any adapted processes $X, X'$ taking values in $\mathbb{U}$ and any times $s< t$, if $X_u = X'_u$ for all $u\in[0,t[$ and $\mathcal{V}_t(X)\geq \mathcal{V}_t(X')$ a.s. then $\mathcal{V}_s(X) \geq \mathcal{V}_s(X')$ a.s. 

\end{enumerate}
\end{theorem}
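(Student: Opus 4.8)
I would establish the two implications separately: $(ii)\Rightarrow(i)$ by a downward induction on the time index, and $(i)\Rightarrow(ii)$ by contraposition, by constructing a policy problem on which optimality fails to imply time-consistency.

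For $(ii)\Rightarrow(i)$, fix a compact policy set $\mathcal{U}$ and an optimal viable collection $\{X^t\}$ with realised policy $\hat X$. By Remark \ref{rem:optiminequality} we already have $\mathcal{V}_t(X^t)\geq\mathcal{V}_t(\hat X)$ a.s.\ for every $t$, so only the reverse inequality is needed, and I would prove it by induction downward from $t=T$, the base case being Lemma \ref{lem:XtnearT} since $X^T=\hat X$. For the inductive step, assume $\mathcal{V}_{t+1}(\hat X)\geq\mathcal{V}_{t+1}(X^{t+1})$ a.s.\ (equivalently, time-consistency at $t+1$). Viability together with the definition $\hat X_u=X^u_u$ gives $X^t_u=\hat X_u$ for all $u\leq t$, so $X^t\in\mathcal{U}|_{t+1}^{\hat X}$; optimality of $X^{t+1}$ then yields $\mathcal{V}_{t+1}(X^{t+1})\geq\mathcal{V}_{t+1}(X^t)$, hence $\mathcal{V}_{t+1}(\hat X)\geq\mathcal{V}_{t+1}(X^t)$ a.s. As $\hat X$ and $X^t$ agree on $[0,t+1[$, applying $(ii)$ to the pair $(\hat X,X^t)$ at the times $t<t+1$ gives $\mathcal{V}_t(\hat X)\geq\mathcal{V}_t(X^t)$ a.s., which with Remark \ref{rem:optiminequality} is time-consistency at $t$, closing the induction.

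For $(i)\Rightarrow(ii)$, suppose $(ii)$ fails: there are adapted $\mathbb{U}$-valued $X,X'$ with $X_u=X'_u$ on $[0,t[$ and $\mathcal{V}_t(X)\geq\mathcal{V}_t(X')$ a.s., yet $\mathbb{P}(B)>0$ where $B:=\{\mathcal{V}_s(X)<\mathcal{V}_s(X')\}\in\mathcal{F}_s$ for some $s<t$. I would let $\mathcal{U}$ be the set of adapted selections of the correspondence equal to $\{X_u\}$ for $u<t$ and to $\{X_u,X'_u\}$ for $u\geq t$; this contains $X$, $X'$, and $Z:=I_BX'+I_{B^c}X$, and is automatically closed under pasting. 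Taking this to be a standard problem, fix an optimal viable collection $\{X^t\}$ on it with realised policy $\hat X$. Since all policies in $\mathcal{U}$ agree on $[0,t[$, we have $\mathcal{U}|_u^{\hat X}=\mathcal{U}$ for $u\leq t$, and the zero-one law gives $\mathcal{V}_u(I_AX+I_{A^c}X')=I_A\mathcal{V}_u(X)+I_{A^c}\mathcal{V}_u(X')$ for $A\in\mathcal{F}_u$, $u\leq t$; in particular $\mathcal{V}_t(Z)\leq\mathcal{V}_t(X)$ a.s.\ and $\mathcal{V}_s(Z)=I_B\mathcal{V}_s(X')+I_{B^c}\mathcal{V}_s(X)>\mathcal{V}_s(X)$ on $B$. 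The weak optimality of $X$ at time $t$ should force the realised policy $\hat X$ to follow $X$ from time $t$ onward, so $\mathcal{V}_s(\hat X)=\mathcal{V}_s(X)$; but $X^s$ uniformly maximises $\mathcal{V}_s$ over $\mathcal{U}|_s^{\hat X}=\mathcal{U}\ni Z$, so $\mathcal{V}_s(X^s)\geq\mathcal{V}_s(Z)>\mathcal{V}_s(X)=\mathcal{V}_s(\hat X)$ on $B$. Thus $\{X^t\}$ is optimal but fails condition (ii) of Definition \ref{def:timeconsistent}, so $(i)$ fails.

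The routine parts are the zero-one-law decompositions and keeping track of which conditional policy spaces equal $\mathcal{U}$. I expect the main obstacle to lie in $(i)\Rightarrow(ii)$: one must verify that the policy set built from $X$ and $X'$ really is a standard problem --- compact, with uniform maximisers existing at \emph{every} stage, not merely at $s$ and $t$ --- and, above all, one must control the realised policy, proving that the (weak) advantage of $X$ over $X'$ at time $t$, propagated through the stage-by-stage optimisation and viability, does force $\hat X$ to coincide with $X$ on $B$ at all times $\geq t$. Reconciling the forward-looking, stage-by-stage optimisation with the eventually realised path is the crux; in line with the paper's own remarks on existence, one may either restrict to filtrations finitely generated at each time (so the selection set above is genuinely compact) or deduce the existence of optimal choices on $\mathcal{U}$ directly from the zero-one law.
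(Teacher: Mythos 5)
Your proof of (ii)$\Rightarrow$(i) is correct and is essentially the paper's own argument: a downward induction from $t=T$ (grounded in Lemma \ref{lem:XtnearT}), using viability to place $X^t$ in $\mathcal{U}|_{t+1}^{\hat X}$, optimality of $X^{t+1}$ to get $\mathcal{V}_{t+1}(\hat X)\geq\mathcal{V}_{t+1}(X^t)$, and then (ii) to pull this back to time $t$, with Remark \ref{rem:optiminequality} supplying the reverse inequality.

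The direction (i)$\Rightarrow$(ii) contains a genuine gap, and it is precisely the step you yourself flag as the crux. Your policy set --- all adapted selections from $\{X_u,X'_u\}$ for $u\geq t$ --- is too large. It contains ``time-mixed'' selections (e.g.\ follow $X$ at time $t$ and $X'$ at time $t+1$), and the zero-one law (\ref{eq:zeroonelaw}) says nothing about the value of such a process: it only decomposes $\mathcal{V}_u(I_AW+I_{A^c}W')$ over $\mathcal{F}_u$-events, it does not express a time-mixed selection's value in terms of $\mathcal{V}_u(X)$ and $\mathcal{V}_u(X')$. Consequently (a) $X$ need not be a uniform maximiser of $\mathcal{V}_t$ on your $\mathcal{U}$ at all, and (b) even if $X^t=X$, nothing prevents the later optimisations $X^u$, $u>t$, from selecting processes that deviate from $X$, so the assertion that $\hat X$ ``follows $X$ from time $t$ onward'' and hence $\mathcal{V}_s(\hat X)=\mathcal{V}_s(X)$ is unsupported; the argument does not close. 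The paper avoids this by taking the much smaller set $\mathcal{U}=\{I_AX+I_{A^c}X':A\in\mathcal{F}_t\}$: every member is a single $\mathcal{F}_t$-pasting of $X$ and $X'$, so the zero-one law pins down \emph{all} its values at times $u\geq t$ as $I_A\mathcal{V}_u(X)+I_{A^c}\mathcal{V}_u(X')$; in particular $\mathcal{V}_t(X)\geq\mathcal{V}_t(X')$ makes $X$ a uniform maximiser at time $t$, and the realised policy stays inside a family whose time-$s$ values are directly comparable with $\mathcal{V}_s(X)$ and $\mathcal{V}_s(X')$. A second, minor, difference: the paper argues directly rather than by contraposition, which is strictly easier --- one only has to exhibit \emph{one} optimal viable collection (with $X^t=X$, $\hat X=X$) and invoke its assumed time-consistency to get $\mathcal{V}_s(X')\leq\mathcal{V}_s(X^s)=\mathcal{V}_s(\hat X)=\mathcal{V}_s(X)$, whereas your contrapositive additionally requires you to prove that the constructed collection is optimal yet \emph{fails} time-consistency. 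I would redo this half with the paper's choice of $\mathcal{U}$ and the direct formulation.
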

\begin{proof}
We interpret all (in-)equalities as $\mathbb{P}$-a.s.

\emph{(i implies ii.)} Assume our policy space is given by $\mathcal{U}=\{I_AX+I_{A^c}X': A\in\mathcal{F}_t\}$. Note that as $X_u=X'_u$ for all $u\in[0,t[$, we have $\mathcal{U}_t^{\hat X} = \mathcal{U}_t^{X} = \mathcal{U}_t^{X'}$. Then at time $t$, if $\mathcal{V}_t(X)\geq\mathcal{V}_t(X')$ we will find $X^t = X$ is an optimal policy. This implies that $\hat X = X$, as $X_u=X'_u$ for $u\in[0,t[$.  Hence by time consistency,
\[\mathcal{V}_{s}(X') \leq \mathcal{V}_s(X^s) = \mathcal{V}_s(\hat X)=\mathcal{V}_s(X).\]

\emph{(ii implies i.)}
Let $s$ be the first time that $\mathcal{V}_t(X^t) = \mathcal{V}_t(\hat X)$ for all $t>s$. By Lemma \ref{lem:XtnearT}, $s<T$. As $\{X^t\}$ is optimal and $X^{s+1}\in\mathcal{U}|_{s}^{X^s}$, we know
\[\mathcal{V}_{s+1}(X^{s})\leq \mathcal{V}_{s+1}(X^{s+1}) = \mathcal{V}_{s+1}(\hat X).\]
By (ii), this implies that $\mathcal{V}_{s}(X^{s}) \leq \mathcal{V}_{s}(\hat X).$
As $X^s$ is optimal, it follows that $\mathcal{V}_s(X^{s}) = \mathcal{V}_{s}(\hat X)$. 

Therefore, if $\mathcal{V}_t(X^t) = \mathcal{V}_t(\hat X)$ for all $t>s$, then $\mathcal{V}_s(X^s) = \mathcal{V}_s(\hat X)$. By induction, this must hold for all times, that is, the optimal choice is consistent.
\end{proof}

\begin{corollary}\label{cor:Bellmanconsistent}
The value function given by Bellman's equation is time consistent for \emph{any} initial policy set $\mathcal{U}$.
\end{corollary}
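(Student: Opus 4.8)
The plan is to verify that a value function defined by Bellman's equation satisfies condition (ii) of Theorem \ref{thm:equivtimeconsistent}, and then to read the conclusion off from that theorem. Here ``$\mathcal{V}$ is given by Bellman's equation'' is taken in its natural sense: $\mathcal{V}$ is built by a backward recursion
\[
\mathcal{V}_u(X) = F_u\big(X_u,\ \mathcal{V}_{u+1}(X)\big),\qquad u = T-1,\, T-2,\,\dots,\,0,
\]
from a prescribed terminal value $\mathcal{V}_T(X)$, where each aggregator $F_u$ is nondecreasing in its second argument. This monotonicity in the continuation value is precisely the structural content of Bellman's equation; in the risk-measure language it is the statement $\mathcal{V}_u(X) = -\rho_u\big(\mathcal{V}_{u+1}(X)\big)$ with $\rho_u$ monotone decreasing. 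Note that $F_u$ may depend on $X_u$ (and on the current state), but reads nothing of the path $X$ beyond time $u$.

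First I would fix adapted processes $X, X'$ and times $s<t$ with $X_u = X'_u$ for all $u\in[0,t[$ and $\mathcal{V}_t(X)\geq \mathcal{V}_t(X')$ a.s., and claim that $\mathcal{V}_u(X)\geq \mathcal{V}_u(X')$ a.s. for every $u\in[s,t]$, by backward induction on $u$. The base case $u=t$ is the hypothesis. For the inductive step, take $s\le u<t$ and suppose $\mathcal{V}_{u+1}(X)\geq \mathcal{V}_{u+1}(X')$ a.s.; since $u\in[0,t[$ we have $X_u=X'_u$, so monotonicity of $F_u$ in its second argument gives
\[
\mathcal{V}_u(X) = F_u\big(X_u,\ \mathcal{V}_{u+1}(X)\big) \geq F_u\big(X'_u,\ \mathcal{V}_{u+1}(X')\big) = \mathcal{V}_u(X')\qquad\text{a.s.}
\]
Taking $u=s$ yields $\mathcal{V}_s(X)\geq \mathcal{V}_s(X')$ a.s., which is exactly statement (ii) of Theorem \ref{thm:equivtimeconsistent}. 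By that theorem, (ii) implies (i), so every optimal policy choice for a Bellman value function is time consistent for every initial compact $\mathcal{U}$.

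To obtain the claim for an arbitrary policy set, I would observe that compactness entered Theorem \ref{thm:equivtimeconsistent} only through Theorem \ref{thm:Optimalpolicyexists}, i.e.\ solely to guarantee that a viable optimal collection $\{X^t\}$ exists. The ``(ii) implies (i)'' argument itself uses only the existence of such a collection, together with Lemma \ref{lem:XtnearT} and Remark \ref{rem:optiminequality}, and never compactness. Hence, whenever an optimal policy choice exists over a policy set $\mathcal{U}$ --- compact or not --- the same induction (on the first time $s$ with $\mathcal{V}_t(X^t)=\mathcal{V}_t(\hat X)$ for all $t>s$) shows it is time consistent; this gives the corollary as stated.

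The only point requiring any care is the one hidden in the first paragraph: pinning down what ``given by Bellman's equation'' should mean in this abstract setting and isolating the monotonicity of the aggregators $F_u$ in the continuation value as its essential feature. Once that is granted, the rest is the short backward induction above plus the citation of Theorem \ref{thm:equivtimeconsistent}, and there is no genuine obstacle.
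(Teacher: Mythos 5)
Your proof is correct and follows essentially the same route as the paper: write the Bellman value as a recursion in the continuation value and check condition (ii) of Theorem \ref{thm:equivtimeconsistent}. If anything, your version is slightly more careful than the paper's, which only observes that $\mathcal{V}_s(X)$ is ``a functional of $\{X_u\}_{u\in[s,t[}$ and $\mathcal{V}_t(X)$'' and leaves implicit the monotonicity of that functional in $\mathcal{V}_t(X)$ (coming from monotonicity of conditional expectation), whereas you isolate exactly that monotonicity of the aggregator as the step that drives the backward induction.
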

\begin{proof}
Let $f(\omega, s, X_s)$ be the payoff at time $s$ of following policy $X_s$. Bellman's equation then gives, for a fixed policy $X$, the value function
\[\mathcal{V}_s(X) = E[f(\omega, s, X_s) +\mathcal{V}_{s+1}(X)|\mathcal{F}_s],\]
By recursion, given past policy $\hat X$, this clearly implies that $\mathcal{V}_s(X)$ is a functional only of $\{X_u\}_{u\in[s,t[}$ and $\mathcal{V}_t(X)$, and hence, statement (ii) of Theorem \ref{thm:equivtimeconsistent} is satisfied.
\end{proof}

\begin{corollary}
The value functions given by dynamic risk measures and nonlinear expectations are time consistent for \emph{any} initial policy set $\mathcal{U}$.
\end{corollary}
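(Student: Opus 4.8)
The plan is to reduce to statement~(ii) of Theorem~\ref{thm:equivtimeconsistent} exactly as in the proof of Corollary~\ref{cor:Bellmanconsistent}, the only change being that the linear conditional expectation used there is replaced by the recursive and monotone operator defining the dynamic risk measure (or the nonlinear expectation).

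First I would make the construction of $\mathcal{V}$ from a dynamic risk measure $\{\rho_t\}$ explicit. Writing $f(\omega, s, X_s)$ for the (here $\mathcal{F}_s$-measurable) payoff at time $s$ of following policy $X_s$, I define
\[\mathcal{V}_T(X) = f(\omega, T, X_T), \qquad \mathcal{V}_s(X) = -\rho_s\bigl(f(\omega, s, X_s) + \mathcal{V}_{s+1}(X)\bigr) \quad (s < T),\]
where $\rho_s$ denotes the one-step conditional risk measure $L^1(\mathcal{F}_{s+1})\to L^1(\mathcal{F}_s)$; in the purely terminal case this reads simply $\mathcal{V}_s(X) = -\rho_s(\xi_X)$ for the terminal payoff $\xi_X\in L^1(\mathcal{F}_T)$ generated by $X$. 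For a nonlinear expectation $\{\mathcal{E}_t\}$ the recursion is the same without the minus signs, $\mathcal{V}_s(X) = \mathcal{E}_s\bigl(f(\omega, s, X_s) + \mathcal{V}_{s+1}(X)\bigr)$. The only structural input I use is monotonicity of $\rho_s$ (respectively $\mathcal{E}_s$), together with the recursivity property $\rho_{s}(\,\cdot\,) = \rho_s(-\rho_t(\,\cdot\,))$ (respectively $\mathcal{E}_s = \mathcal{E}_s\circ\mathcal{E}_t$), which is exactly what makes this recursive definition agree with applying the dynamic risk measure directly; that everything stays in $L^1$ under the usual integrability hypotheses is a technicality I would note but not dwell on.

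Next I would verify (ii). Let $X, X'$ be adapted with $X_u = X'_u$ for all $u\in[0,t[$ and $\mathcal{V}_t(X)\ge\mathcal{V}_t(X')$ a.s. Since $[s,t[\subseteq[0,t[$, the payoffs $f(\omega,u,X_u)$ and $f(\omega,u,X'_u)$ coincide for every $u\in[s,t[$. Unrolling the recursion from $t$ down to $s$ then expresses $\mathcal{V}_s(X)$ and $\mathcal{V}_s(X')$ as the images of $\mathcal{V}_t(X)$ and $\mathcal{V}_t(X')$ under one and the same map, namely the composition over $u\in[s,t[$ of the order-preserving maps $\eta\mapsto -\rho_u(f(\omega,u,X_u)+\eta)$. (In the purely terminal case recursivity collapses this to the single map $\eta\mapsto -\rho_s(\eta)$, via $\mathcal{V}_s(X) = -\rho_s(\xi_X) = -\rho_s(-\rho_t(\xi_X)) = -\rho_s(\mathcal{V}_t(X))$.) Each such map is order preserving because $\eta\ge\eta'$ gives $f(\omega,u,X_u)+\eta\ge f(\omega,u,X_u)+\eta'$, hence $\rho_u(f(\omega,u,X_u)+\eta)\le\rho_u(f(\omega,u,X_u)+\eta')$ by monotonicity, and the outer minus sign restores the direction. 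Consequently $\mathcal{V}_t(X)\ge\mathcal{V}_t(X')$ forces $\mathcal{V}_s(X)\ge\mathcal{V}_s(X')$ a.s., which is precisely statement~(ii); for a nonlinear expectation the same computation applies verbatim with the monotonicity already in the right direction. Theorem~\ref{thm:equivtimeconsistent} then gives time consistency for every initial compact policy set $\mathcal{U}$.

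The single step I expect to require genuine care is the first one: aggregating a policy's payoffs \emph{by the risk measure's own recursion}, so that the comparison in (ii) genuinely collapses to a composition of monotone maps once $X$ and $X'$ agree on $[0,t[$. Given that, the remainder is the same short combination of recursivity and monotonicity already implicit in Corollary~\ref{cor:Bellmanconsistent}.
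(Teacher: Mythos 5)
Your proposal is correct and follows essentially the same route as the paper: the paper's proof also reduces to statement~(ii) of Theorem~\ref{thm:equivtimeconsistent} by writing $\mathcal{V}_t(X)=\mathcal{E}(V_T^X|\mathcal{F}_t)=-\rho_t(V_T^X)$ and using recursivity plus monotonicity to express $\mathcal{V}_s(X)$ as a nondecreasing functional of $\mathcal{V}_t(X)$ --- exactly your ``purely terminal case.'' Your additional recursion with intermediate payoffs $f(\omega,s,X_s)$ is a harmless generalisation the paper does not need, since its value function is defined on the terminal wealth alone.
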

\begin{proof}
In this context, the policy $X$ determines a (stochastic) terminal value $V_T^X$. Our nonlinear expectation/dynamic risk measure the yields the value 
\[\mathcal{V}_t(X):=\mathcal{E}(V_T^X|\mathcal{F}_t)=-\rho_t(V^X_T).\]
By the recursivity and monotonicity properties of nonlinear expectations/dynamic risk measures, we can write $\mathcal{V}_s(X)$ as a nondecreasing functional of the future values $\mathcal{V}_t(X)$. Hence statement (ii) of Theorem \ref{thm:equivtimeconsistent} is satisfied.
\end{proof}

\subsection{Dependable decisions}
We now propose a new type of `time consistency', which we call `dependability'. One can characterise classical time-consistency through the statement `a policy $X$ is time consistent if the policies chosen in the future, pasted together with $X$, give \emph{the same value today} as $X$ does.' 

Our new definition would then read, `a policy $X$ is \emph{dependable} if the policies chosen in the future, pasted together with $X$, give \emph{higher values today} than policy $X$ does'. In some sense, dependable policies are those that form a lower bound on the value function, irrespective of future decisions.

Intuitively, we suppose that, at any given time, a decision maker can only consider a subset of all possible plans, and will select the optimal policy from this subset. As time progresses, more plans can be considered, and so preferable alternatives may arise. `Dependability' is then a notion of time-consistency which allows for these new alternatives.

\begin{definition} \label{def:dependable}
Consider a standard problem. Suppose that, for each time $t\geq 0$, we only consider policies restricted to some compact subset $\tilde{\mathcal{U}}|_{t}^{\hat X}\subseteq \mathcal{U}|_{t}^{\hat X}$. Assume $\tilde{\mathcal{U}}|_{t}^{\hat X}$ satisfies the pasting property (\ref{eq:pastingproperty}).

Let $\{X^t\}$ be a viable collection of policies $X^t\in \mathcal{U}|_s^{\hat X}$ for $s<t$, such that $X^t\in\tilde{\mathcal{U}}|_t^{\hat X}$ for each $t$. Note in general $X^t\notin\tilde{\mathcal{U}}|_s^{\hat X}$ for $s<t$. 

This collection is called $\tilde{\mathcal{U}}$-\emph{optimal} if
\begin{enumerate}[(i)]
\item for any $t$, the policy $X^t$ uniformly maximises $\tilde{\mathcal{V}}_t(X)$ for $X\in\tilde{\mathcal{U}}|_{t}^{\hat X}$,
\end{enumerate}
and \emph{dependable} if
\begin{enumerate}[(i)]
\setcounter{enumi}{1}
\item for any time $t$, we have
\[\tilde{\mathcal{V}}_t(X^t) \leq \tilde{\mathcal{V}}_t(\hat X), \qquad \mathbb{P}\text{-a.s.}\]
\end{enumerate}
\end{definition}

\begin{remark}
As highlighted by Remark \ref{rem:optiminequality}, when $\tilde{\mathcal{U}}|_{t}^{\hat X} = \mathcal{U}|_{t}^{\hat X}$, this will degenerate into the usual time-consistency properties. Here, on the other hand, our restricted set of policies $\tilde{\mathcal{U}}|_{t}^{\hat X}$, over which we optimise at each time point, can make our problem time-inconsistent. 
\end{remark}
\begin{remark}
As $\tilde{\mathcal{U}}|_{t}^{\hat X}$ is compact and satisfies the pasting property (\ref{eq:pastingproperty}), the other `standard' properties of $\mathcal{V}$ show the existence of a policy $X^t$ uniformly maximising $\mathcal{V}_t$ on $\tilde{\mathcal{U}}|_{t}^{\hat X}$.

Note that as we have now restricted the set of policies which we can consider at any time point, the result of Theorem \ref{thm:equivtimeconsistent} no longer applies.
\end{remark}

Under this definition, it is perfectly reasonable that a na{\"\i}ve policy may be selected early on. However, when it is reconsidered later, this decision might be changed. The difference is that this decision is `dependable' if, had we been allowed to initially consider the decision with the later change, we would have preferred it to the policy initially chosen.

This `dependable' approach to time-consistency is a natural one for problems where only a subset of possible policies can be considered at each time. We shall see that the problems induced by the moving-horizon approach to risk measurement are of this type.

\section{An investment policy model}\label{sec:movhorizintro}
We now move to the specific problem of consistency of decisions based on a moving horizon.

Consider a probability space based on a classical model of a financial market in discrete time. We assume that all positions will be closed out at or before some distant deterministic time $T$. Hence, time can be indexed by the set $\{0,1,...,T\}$.

 We suppose that there are $d$ risky assets $\{S^i\}$ defined on some complete filtered probability space $(\Omega, \mathcal{F}, \{\mathcal{F}_t\}, \mathbb{P})$. We assume that $\mathbf{S}_t\in L^2(\mathcal{F}_t)$ for all $t$, where $\mathbf{S}$ denotes the vector of risky asset prices. We also assume the existence of a `risk-free' asset, however, for simplicity, we shall assume that the risk free interest rate is zero. Equivalently, we assume all quantities have been appropriately discounted. We assume that there are no transaction costs. 

A firm wishing to invest in this market has a range of self-financing policies available, which is a subset $\mathcal{U}$ of the adapted processes in $\mathbb{R}^d=\mathbb{U}$. We assume that $\mathcal{U}$ is a compact subset of adapted processes $\{X:X_t\in L^2(\mathbb{R}^d; \mathcal{F}_t)\}$. 

An investor's wealth process $V^X$ satisfies the stochastic difference equation
\begin{equation}\label{eq:VdynamicsinS}
V_{t+1}^X = \langle X_t, \mathbf{S}_{t+1}-\mathbf{S}_{t}\rangle + V_t^X.
\end{equation}
(The risk-free asset could also be included, but as we assume the risk-free interest rate is zero, it would not affect the dynamics of $V^X$.) For notational simplicity, we extend $V$ beyond time $T$ by setting $V_u^X=V_T^X$ for all $u>T$. 
Note that a policy $X_t$ describes the choice to be made under every contingency, and is not required to be Markovian or of feedback form.

We now state the following general definition, due to Peng (eg \cite{Peng1997}, \cite{Peng2004}).

\begin{definition}[Nonlinear Expectations]
A system of operators 
\[\mathcal{E}(\cdot|\mathcal{F}_t): L^1(\mathcal{F}_T) \to L^1(\mathcal{F}_t)\]
is called an $\mathcal{F}_t$-consistent nonlinear expectation if it satisfies the following properties.
\begin{enumerate}
\item (Monotonicity) If $Q\geq Q'$ $\mathbb{P}$-a.s. then $\mathcal{E}(Q|\mathcal{F}_t) \geq \mathcal{E}(Q'|\mathcal{F}_t)$, with $\mathcal{E}(Q|\mathcal{F}_t) = \mathcal{E}(Q'|\mathcal{F}_t)$ only if $Q=Q'$ $\mathbb{P}$-a.s.
\item (Constant invariance) For $Q\in L^1(\mathcal{F}_t)$, $\mathcal{E}(Q|\mathcal{F}_t) = Q$.
\item (Recursivity) For any $s\leq t$, $\mathcal{E}(\mathcal{E}(Q|\mathcal{F}_t)|\mathcal{F}_s) = \mathcal{E}(Q|\mathcal{F}_s)$ $\mathbb{P}$-a.s.
\item (Zero-one law) For any $A\in\mathcal{F}_t$, $I_A\mathcal{E}(Q|\mathcal{F}_t) = \mathcal{E}(I_A Q|\mathcal{F}_t)$.
\end{enumerate}
\end{definition}

\begin{remark}
In \cite{Cohen2008c} and \cite{Cohen2009a}, we have given a representation result for these operators in discrete time on finite horizons using the theory of BSDEs. These results give a complete description of nonlinear expectations (and the more general class of nonlinear evaluations) in this context. These results are not germane to the present work so we shall simply proceed by assuming that a nonlinear expectation is given.
\end{remark}

At each time $t$, an investor wishes to choose the `time-$t$-optimal' policy $X^t\in\mathcal{U}|_{t}^{\hat X}$. We shall model their decision as based on a value function given by a dynamic risk measure $\rho_t(\cdot)$, or equivalently, by an $\mathcal{F}_t$-consistent nonlinear expectation $\mathcal{E}(\cdot|\mathcal{F}_t)$. That is we have the following problem:

\begin{definition}
The \emph{simple moving horizon problem} with horizon $m$ is to find a viable policy choice $\{X^t\}$, where for each $t$, $X^t$ uniformly maximises 
\[\mathcal{V}_t(X)=\mathcal{E}(V^{X}_{t+m}|\mathcal{F}_t) = -\rho_t(V_{t+m}^{X})\]
for $X\in\mathcal{U}|_t^{\hat X}$.
\end{definition}

We emphasise at this point that we have chosen our value function such that the time-inconsistency in this problem arises purely because of the short horizon. The nonlinear expectation itself is time-consistent, in the sense of \cite{Artzner2007}. However, the nonlinear expectation is not being evaluated on the terminal values, which, as we shall see, leads to inconsistencies.

\subsection{An equivalence for policies}
We now show that this problem is, in general, equivalent to a dependable problem.

\begin{definition}
We define
\[I_{[0,t+m[}\mathcal{U}|_{t}^{\hat X}=\{I_{[0,t+m[}X|X\in\mathcal{U}|_{t}^{\hat X}\}.\]
For a given horizon $m$, we say that $\mathcal{U}$ is closed under truncation if, for all times $t$, all past policies $\hat X$,
\[I_{[0,t+m[}\mathcal{U}|_t^{\hat X} \subseteq \mathcal{U}|_t^{\hat X}.\]
\end{definition}

The following Lemma is trivial to prove, however forms the basis for the desired equivalence.
\begin{lemma} \label{lem:horizoniszeros}
At any time $t$, for any policy $X\in I_{[0,t+m[}\mathcal{U}|_{t}^{\hat X}$ we have the identity
\[\mathcal{V}_t(X) = \mathcal{E}(V_{t+m}^X|\mathcal{F}_t) = \mathcal{E}(V_T^X|\mathcal{F}_t)\qquad \mathbb{P}\text{-a.s.}\]
Furthermore, for any policy $X\in\mathcal{U}$, any time $t$, 
\[\mathcal{V}_t(X) = \mathcal{V}_t(I_{[0,t+m[}X) \quad \mathbb{P}\text{-a.s.}\]

\end{lemma}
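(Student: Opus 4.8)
The plan is to exploit the fact that the wealth process $V^X$ has been extended beyond $T$ by the constant $V^X_u = V^X_T$ for $u > T$, together with the self-financing dynamics (\ref{eq:VdynamicsinS}), to show that truncating a policy to the times $[0,t+m[$ does not change its wealth at the horizon time $t+m$ nor, after the horizon, at $T$. First I would observe that for a policy $X \in I_{[0,t+m[}\mathcal{U}|_t^{\hat X}$, we have $X_u = 0$ for all $u \geq t+m$. Feeding this into (\ref{eq:VdynamicsinS}), the increments $V^X_{u+1} - V^X_u = \langle X_u, \mathbf{S}_{u+1} - \mathbf{S}_u\rangle$ vanish for every $u \geq t+m$, so $V^X_u = V^X_{t+m}$ for all $u \geq t+m$, and in particular $V^X_T = V^X_{t+m}$. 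Hence $\mathcal{V}_t(X) = \mathcal{E}(V^X_{t+m}|\mathcal{F}_t) = \mathcal{E}(V^X_T|\mathcal{F}_t)$, which is the first identity.

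For the second identity, the key point is that $V^X_{t+m}$ depends only on the policy values $X_0, X_1, \dots, X_{t+m-1}$, because (\ref{eq:VdynamicsinS}) telescopes to $V^X_{t+m} = V^X_0 + \sum_{u=0}^{t+m-1}\langle X_u, \mathbf{S}_{u+1}-\mathbf{S}_u\rangle$ (using whatever convention fixes $V^X_0$; if $V_0$ is a fixed initial endowment it is the same for both policies). Since $I_{[0,t+m[}X$ agrees with $X$ on exactly the indices $0,\dots,t+m-1$, the two wealth values at time $t+m$ coincide $\mathbb{P}$-a.s., and therefore $\mathcal{V}_t(X) = \mathcal{E}(V^X_{t+m}|\mathcal{F}_t) = \mathcal{E}(V^{I_{[0,t+m[}X}_{t+m}|\mathcal{F}_t) = \mathcal{V}_t(I_{[0,t+m[}X)$. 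One small technical caveat is that $I_{[0,t+m[}X$ should genuinely be an admissible adapted process valued in $\mathbb{R}^d$ with the integrability $X_u \in L^2(\mathbf{\mathcal{F}}_u)$; this is immediate since we are merely zeroing out coordinates of an already-admissible process, and zero is trivially in $L^2$.

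There is essentially no obstacle here — as the paper itself notes, the Lemma is trivial — but the one place to be careful is bookkeeping on the time indices: checking that $I_{[0,t+m[}$ truncates the \emph{policy} (not the wealth process), that the wealth at the \emph{horizon} $t+m$ is what enters $\mathcal{V}_t$, and that the post-horizon extension of $V$ to $T$ is the flat one coming from the definition $V^X_u = V^X_T$ for $u > T$ combined with zero increments for $t+m \le u \le T$. Once those indices are lined up, both identities are immediate consequences of the telescoped form of (\ref{eq:VdynamicsinS}) and the monotone/linearity-irrelevant fact that $\mathcal{E}(\cdot|\mathcal{F}_t)$ is applied to two $\mathbb{P}$-a.s. equal random variables.
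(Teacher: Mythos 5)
Your proof is correct and fills in exactly the argument the paper leaves implicit (the paper only remarks that the lemma is ``trivial to prove''): zero holdings after $t+m$ freeze the wealth so $V^X_{t+m}=V^X_T$, and the telescoped form of (\ref{eq:VdynamicsinS}) shows $V^X_{t+m}$ depends only on $X_0,\dots,X_{t+m-1}$, giving the second identity. Your index bookkeeping, including the flat extension of $V$ beyond $T$ when $t+m>T$, is the right point to be careful about and is handled correctly.
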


\begin{definition}
The \emph{modified moving horizon problem} with horizon $m$ is to find a viable policy choice $\{X^t\}$, where for each $t$, $X^t$ uniformly maximises 
\[\tilde{\mathcal{V}}_t(X)=\mathcal{E}(V^{X}_{T}|\mathcal{F}_t) = -\rho_t(V_{T}^{X})\]
for $X\in I_{[0,t+m[}\mathcal{U}|_t^{\hat X}=:\tilde{\mathcal{U}}_t^{\hat X}$.
\end{definition}

\begin{remark}
We can now consider the `moving horizon problem' in two distinct ways. Either
\begin{itemize}
\item We take the value function $\mathcal{V}_t(X)=\mathcal{E}(V^X_{t+m}|\mathcal{F}_t)$, in which case we have a time-inconsistent problem, or
\item we take the value function $\tilde{\mathcal{V}}_t(X)=\mathcal{E}(V^X_T|\mathcal{F}_t)$, and then require that our selection $X^t$ must lie in the set $I_{[0,t+m[}\mathcal{U}|_t^{\hat X}$ for each $t$.
\end{itemize}
By Lemma \ref{lem:horizoniszeros}, we can assume, without loss of generality, that the policy $X$ which maximises $\mathcal{V}_t$ will lie in this set, and for all such policies we have $\mathcal{V}_t(X) = \tilde{\mathcal{V}}_t(X)$. 

That is, we can consider the moving horizon in terms of a restriction on the policy space, rather than in terms of evaluating the wealth process $V$ at the moving horizon. The values associated with each policy under these alternative approaches will be identical. 
\end{remark}

We can now give the following positive result for the moving horizon problem.

\begin{theorem} \label{thm:movinghorizondependable}
Any $\tilde{\mathcal{U}}$-optimal solution to the modified moving horizon problem is dependable.

By Lemma \ref{lem:horizoniszeros}, when $\mathcal{U}$ is closed under truncation, this will give the same values and policy choices at all times as when using a moving horizon.
\end{theorem}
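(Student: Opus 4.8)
The plan is to show that any $\tilde{\mathcal{U}}$-optimal collection $\{X^t\}$ for the modified moving horizon problem satisfies property (ii) of Definition \ref{def:dependable}, i.e. $\tilde{\mathcal{V}}_t(X^t)\leq\tilde{\mathcal{V}}_t(\hat X)$ a.s. for all $t$. The key observation is that $\tilde{\mathcal{V}}_t(X)=\mathcal{E}(V_T^X|\mathcal{F}_t)=-\rho_t(V_T^X)$ is a genuine nonlinear-expectation value evaluated at the terminal time $T$, so it \emph{does} enjoy the recursivity and monotonicity properties of $\mathcal{E}$, even though the admissible sets $\tilde{\mathcal{U}}_t^{\hat X}$ do not nest. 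Thus the inconsistency lives entirely in the moving constraint set, and $\tilde{\mathcal{V}}$ itself is time-consistent in the sense of Theorem \ref{thm:equivtimeconsistent}(ii).

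First I would run a backwards induction on $t$, exactly mirroring the ``(ii implies i)'' argument in the proof of Theorem \ref{thm:equivtimeconsistent}. Let $s$ be the largest time with $\tilde{\mathcal{V}}_s(X^s)>\tilde{\mathcal{V}}_s(\hat X)$ on a set of positive measure (if no such $s$ exists we are done); note $s<T$ since by Lemma \ref{lem:XtnearT} $X^T=\hat X$. By the inductive hypothesis, $\tilde{\mathcal{V}}_{s+1}(X^{s+1})\leq\tilde{\mathcal{V}}_{s+1}(\hat X)$ a.s. Now $\hat X\in\mathcal{U}|_{s+1}^{\hat X}$ but $\hat X$ need not lie in the truncated set $\tilde{\mathcal{U}}_{s+1}^{\hat X}$; however, by Lemma \ref{lem:horizoniszeros} we have $\tilde{\mathcal{V}}_{s+1}(\hat X)=\tilde{\mathcal{V}}_{s+1}(I_{[0,s+1+m[}\hat X)$ and $I_{[0,s+1+m[}\hat X\in\tilde{\mathcal{U}}_{s+1}^{\hat X}$, so since $X^{s+1}$ uniformly maximises $\tilde{\mathcal{V}}_{s+1}$ over $\tilde{\mathcal{U}}_{s+1}^{\hat X}$ we in fact get $\tilde{\mathcal{V}}_{s+1}(X^{s+1})\geq\tilde{\mathcal{V}}_{s+1}(\hat X)$, hence equality a.s. Therefore $V_T^{X^{s+1}}$ and $V_T^{\hat X}$ have the same $\mathcal{E}(\cdot|\mathcal{F}_{s+1})$; since $X^{s+1}$ and $\hat X$ agree on $[0,s+1[$, and in particular on $[0,s+1[\supseteq[0,s[$, the recursivity/monotonicity of $\mathcal{E}$ (equivalently, Theorem \ref{thm:equivtimeconsistent}(ii), which holds for $\tilde{\mathcal{V}}$) gives $\tilde{\mathcal{V}}_s(X^{s+1})=\mathcal{E}(\mathcal{E}(V_T^{X^{s+1}}|\mathcal{F}_{s+1})|\mathcal{F}_s)=\mathcal{E}(\mathcal{E}(V_T^{\hat X}|\mathcal{F}_{s+1})|\mathcal{F}_s)=\tilde{\mathcal{V}}_s(\hat X)$ a.s.

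It remains to compare $\tilde{\mathcal{V}}_s(X^s)$ with $\tilde{\mathcal{V}}_s(X^{s+1})$. Here the crucial point is that $X^{s+1}$ is a competitor available at time $s$: since $\{X^t\}$ is viable, $X^{s+1}\in\mathcal{U}|_s^{X^s}$, and after truncating, $I_{[0,s+m[}X^{s+1}\in\tilde{\mathcal{U}}_s^{\hat X}$ while $\tilde{\mathcal{V}}_s(X^{s+1})=\tilde{\mathcal{V}}_s(I_{[0,s+m[}X^{s+1})$ by Lemma \ref{lem:horizoniszeros}. Because $X^s$ uniformly maximises $\tilde{\mathcal{V}}_s$ over $\tilde{\mathcal{U}}_s^{\hat X}$, this yields $\tilde{\mathcal{V}}_s(X^s)\geq\tilde{\mathcal{V}}_s(X^{s+1})=\tilde{\mathcal{V}}_s(\hat X)$ a.s., contradicting the choice of $s$ unless the two are equal. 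Pushing the induction down to $t=0$ establishes $\tilde{\mathcal{V}}_t(X^t)\leq\tilde{\mathcal{V}}_t(\hat X)$ (indeed $=$, modulo the $\geq$ from $\tilde{\mathcal{U}}$-optimality against the truncated $\hat X$) for every $t$, which is dependability. The final sentence of the theorem is then immediate from Lemma \ref{lem:horizoniszeros}: when $\mathcal{U}$ is closed under truncation, $\tilde{\mathcal{U}}_t^{\hat X}\subseteq\mathcal{U}|_t^{\hat X}$ and $\mathcal{V}_t=\tilde{\mathcal{V}}_t$ on this set, so maximising $\tilde{\mathcal{V}}_t$ over $\tilde{\mathcal{U}}_t^{\hat X}$ produces the same optimal values and (a version of the) same policy choices as the original simple moving horizon problem.

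The step I expect to be the main obstacle is the bookkeeping around truncation and viability in the reduction $\tilde{\mathcal{V}}_s(X^{s+1})=\tilde{\mathcal{V}}_s(\hat X)$: one must be careful that $I_{[0,s+m[}X^{s+1}$ really does land in $\tilde{\mathcal{U}}_s^{\hat X}=I_{[0,s+m[}\mathcal{U}|_s^{\hat X}$ (this needs $X^{s+1}\in\mathcal{U}|_s^{\hat X}$, which is part of viability as stated in Definition \ref{def:dependable}), and that the equality $\tilde{\mathcal{V}}_s(\hat X)=\tilde{\mathcal{V}}_{s+1}(\hat X)$-pull-back through $\mathcal{E}$ is legitimate — i.e. that $\tilde{\mathcal{V}}$ genuinely satisfies Theorem \ref{thm:equivtimeconsistent}(ii). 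Everything else is the same backwards-induction skeleton already used for Theorem \ref{thm:equivtimeconsistent}, transplanted to the constrained setting.
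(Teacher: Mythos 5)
There is a genuine gap, and it stems from a misreading of Lemma \ref{lem:horizoniszeros}. That lemma gives the identity $\mathcal{V}_t(X)=\mathcal{V}_t(I_{[0,t+m[}X)$ for the \emph{moving-horizon} value $\mathcal{V}_t(X)=\mathcal{E}(V^X_{t+m}|\mathcal{F}_t)$, where truncation beyond $t+m$ is invisible; it does \emph{not} give $\tilde{\mathcal{V}}_t(X)=\tilde{\mathcal{V}}_t(I_{[0,t+m[}X)$ for the terminal value $\tilde{\mathcal{V}}_t(X)=\mathcal{E}(V^X_T|\mathcal{F}_t)$, because truncating a policy changes $V_T^X$. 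You invoke this false identity twice: once to conclude $\tilde{\mathcal{V}}_{s+1}(\hat X)=\tilde{\mathcal{V}}_{s+1}(I_{[0,s+1+m[}\hat X)$ and hence $\tilde{\mathcal{V}}_{s+1}(X^{s+1})=\tilde{\mathcal{V}}_{s+1}(\hat X)$, and once to conclude $\tilde{\mathcal{V}}_s(X^{s+1})=\tilde{\mathcal{V}}_s(I_{[0,s+m[}X^{s+1})$. The first conclusion is demonstrably false: the example of Section \ref{sec:dependinconsistentexample} has $\tilde{\mathcal{V}}_0(X^0)=0.1889<0.4741=\tilde{\mathcal{V}}_0(\hat X)$, a \emph{strict} inequality, so no equality $\tilde{\mathcal{V}}_t(X^t)=\tilde{\mathcal{V}}_t(\hat X)$ can be proved — dependability is genuinely one-sided. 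Moreover, your final step derives $\tilde{\mathcal{V}}_s(X^s)\geq\tilde{\mathcal{V}}_s(\hat X)$, which points in the \emph{opposite} direction from the dependability inequality $\tilde{\mathcal{V}}_s(X^s)\leq\tilde{\mathcal{V}}_s(\hat X)$ and does not contradict your standing assumption that $\tilde{\mathcal{V}}_s(X^s)>\tilde{\mathcal{V}}_s(\hat X)$ on a set of positive measure, so the claimed contradiction never materialises.

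The correct mechanism runs the comparison the other way: the \emph{earlier} optimum remains feasible \emph{later}. Since $X^t\in\tilde{\mathcal{U}}|_t^{\hat X}$ is already truncated beyond $t+m<t+1+m$ and agrees with $\hat X$ on $[0,t+1[$ (because $\hat X_t=X^t_t$), it lies in $\tilde{\mathcal{U}}|_{t+1}^{\hat X}$, and hence is dominated there by $X^{t+1}$: $\tilde{\mathcal{V}}_{t+1}(X^t)\leq\tilde{\mathcal{V}}_{t+1}(X^{t+1})$. Recursivity and monotonicity of $\mathcal{E}$ push this inequality back to any earlier time $s$, so $\tilde{\mathcal{V}}_s(X^t)\leq\tilde{\mathcal{V}}_s(X^{t+1})$, and chaining over $t=s,s+1,\dots,T-1$ together with $X^T=\hat X$ (Lemma \ref{lem:XtnearT}) yields
\[\tilde{\mathcal{V}}_s(X^s)\leq\tilde{\mathcal{V}}_s(X^{s+1})\leq\cdots\leq\tilde{\mathcal{V}}_s(X^T)=\tilde{\mathcal{V}}_s(\hat X),\]
which is dependability. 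Your instinct that the only delicate points are the truncation bookkeeping and the validity of intertemporal monotonicity for $\tilde{\mathcal{V}}$ was right, but the feasibility you need is of $X^t$ inside $\tilde{\mathcal{U}}|_{t+1}^{\hat X}$, not of a truncation of $X^{s+1}$ inside $\tilde{\mathcal{U}}|_s^{\hat X}$.
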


\begin{proof}
For each $t$, we choose $X^t$ to maximise $\tilde{\mathcal{V}}_t(X^t) = \mathcal{E}(V^{X^t}_T|\mathcal{F}_t)$, for $X^t\in\tilde{\mathcal{U}}|_t^{\hat X}$. We also know that, for any $X\in\tilde{\mathcal{U}}|_{t+1}^{\hat X} = \tilde{\mathcal{U}}|_{t+1}^{X^t}$, we have $\tilde{\mathcal{V}}_{t+1}(X)\leq \tilde{\mathcal{V}}_{t+1}(X^{t+1})$. Hence, by the monotonicity of nonlinear expectations, we know that 
\[\tilde{\mathcal{V}}_t(X) \leq \tilde{\mathcal{V}}_t(X^{t+1}) \text{ for all }X\in\tilde{\mathcal{U}}|_{t+1}^{\hat X}.\]
Specifically, this implies 
\[\tilde{\mathcal{V}}_t(X^t) \leq \tilde{\mathcal{V}}_t(X^{t+1}).\]

Similarly, it follows that 
\[\tilde{\mathcal{V}}_{t-1}(X^{t-1}) \leq \tilde{\mathcal{V}}_{t-1}(X^t) \leq \tilde{\mathcal{V}}_{t-1}(X^{t+1}).\]
where the last inequality is again by monotonicity of nonlinear expectations.

By induction, this argument shows that for any times $s<t$, 
\[\tilde{\mathcal{V}}_s(X^s)\leq \tilde{\mathcal{V}}_s(X^t).\]
Hence, for all $s\leq T$, as by Lemma \ref{lem:XtnearT} $\hat X = X^T$, we have the result
\[\tilde{\mathcal{V}}_s(X^s)\leq \tilde{\mathcal{V}}_s(\hat X).\]
\end{proof}

\begin{remark}\label{rem:nocommitmentdependable}
Note that the requirement on $\mathcal{U}$ is that, in some sense, it does \emph{not} enforce commitment, specifically that one can always choose to `quit at the horizon', that is, to take the truncated policy $I_{[0,t+m[}X$.
\end{remark}

\section{A dependable but inconsistent example} \label{sec:dependinconsistentexample}
To demonstrate the usefulness of these results, we give a simple, if contrived, example of a situation where the moving horizon approach is inconsistent, but the equivalent approach using a modified policy space is dependable.

Suppose our market contains only one asset $S$. The policy space $\mathcal{U}$ consists of those processes $X$ of the form $X_u=I_{u<\sigma}$ where $\sigma$ is a stopping time. 

Let $T=3$, and suppose that values are given by the nonlinear expectation
\[\mathcal{E}(Q|\mathcal{F}_t) = -10\log E[e^{-Q/10}|\mathcal{F}_t].\]
This is evaluated on a horizon two periods from the present, that is, $m=2$, and 
\[\mathcal{V}_t(X) = \mathcal{E}\left.\left(V_{t+m}^X\right|\mathcal{F}_t\right).\]

Let $S$ follow a non-recombining binomial tree, with independent increments given by 
\[\begin{split}
S_0 &= 20\\
S_1-S_0 &= \begin{cases} 1 & \text{w.p. } 0.5\\-0.1 & \text{w.p. } 0.5\end{cases}\\
S_2-S_1 &= \begin{cases} 0.1 & \text{w.p. } 0.5\\-10 & \text{w.p. } 0.5\end{cases}\\
S_3-S_2 &= \begin{cases} 100 & \text{w.p. } 0.5\\-0.1 & \text{w.p. } 0.5\end{cases}\\
\end{split}\]
Here w.p. denotes `with probability'.

It is then easy to see that, in every state of the world $\omega$, the policy chosen at each time will be:
\[\begin{split}
X^0_t &= \begin{cases} 1 & t=0\\ 0 & t>0\end{cases}\\
X^1_t=X^2_t =X^3_t&= 1 \quad \text{a.s. for all }t\\
\end{split}\]
and therefore $\hat X_t =1$ a.s. for all $t$. Comparing these at time $0$, we have 
\[\mathcal{V}_0(X^0) = 0.1889 > -2.4926 = \mathcal{V}_0(\hat X),\]
and so our optimal solution is not time-consistent. 

On the other hand, at any time $t$,  the permitted polices allow the choice $X_u=0$ for $u>t$. That is, $\mathcal{U}$ is closed under truncation, in the sense of Theorem \ref{thm:movinghorizondependable}. Hence we know that this decision is dependable, under an equivalent value function. To show this empirically, we define 
\[\tilde{\mathcal{V}}_t(X) = \mathcal{E}\left.\left(V_T^X\right|\mathcal{F}_t\right)\]
and instead consider, at each time, policies in the restricted set 
\[I_{[0,t+m[}\mathcal{U}|_t^{\hat X}=\tilde{\mathcal{U}}_t^{\hat X}.\] 
On this set, from Lemma \ref{lem:horizoniszeros}, we know $\mathcal{V}_t(X)=\tilde{\mathcal{V}}_t(X)$, and that a policy which uniformly maximises $\mathcal{V}_t$ will lie in this set. We obtain exactly the same optimal policies, but have the values
\[\tilde{\mathcal{V}}_0(X^0) = 0.1889 < 0.4741 = \tilde{\mathcal{V}}_0(\hat X),\]
and so see that (given $X^1=X^2=X^3=\hat X$) our choice is dependable. Note that $\mathcal{V}_0(X^0)=\tilde{\mathcal{V}}_0(X^0)$, as expected.

\section{Acceptable decisions}
Often the question of interest is whether a policy $X$ is acceptable, that is, has a value above a critical level. (Equivalently, has a risk below a critical level.) For simplicity, we consider this decision at time $t=0$. Suppose that this critical value is given by $\mathcal{V}_0(0)$, the value associated with the `null' policy $X\equiv 0$. Our concern that this value function is time-inconsistent, hence we could plan, at time $t=0$, to follow policy $X$, but not follow through with it in the future. In this event, we require a guarantee that the truncated policy which we eventually follow, $\hat X$, yields an acceptable value today.

Now suppose that, for a policy $X$ under consideration, we define the space of available policies
\[\mathcal{U} = \{I_{[0,\tau[}X| \tau \text{ a stopping time}\}.\]
That is, we suppose that could change from following policy $X$ to the policy $0$ at any stopping time $\tau$. This policy space is clearly closed under truncation and is compact. It follows that the optimal policy choice in $\mathcal{U}$, using the modified moving horizon value $\tilde{\mathcal{V}}$, is dependable.

Therefore, if $X$ is acceptable, that is, $\tilde{\mathcal{V}}_0(X)=\mathcal{V}_0(X)\geq 0$, then we can be sure that the optimal realised policy $\hat X$ satisfies
\[\tilde{\mathcal{V}}_0(\hat X)\geq \tilde{\mathcal{V}}_0(X^0)\geq \tilde{\mathcal{V}}_0(X)\geq 0\]
where $X^0$ is the time-zero optimal policy in $\mathcal{U}$.
 For this reason, when it is possible for a position to be `sold off' at any time, we can be confident that future actions will not act to decrease the value/increase the risk assigned to a policy today, at least under an equivalent value function $\tilde{\mathcal{V}}$.
 
\section{Conclusions}
We have discussed the theory of time-consistency, and have given a definition for a new type of property, that of `dependability'. We have shown that, for a simple model of a financial market, under some assumptions on the allowable policies, the optimal decision reached using a moving horizon approach is equal to an optimal dependable decision using an equivalent value function. 

This result gives a partial justification for using a moving horizon approach in risk management. Assume that one can always decide to stop investing at the horizon, (that is, to take the policy $I_{[0,t+m[}X$). Then one can be sure that the optimal policy today, considering only a finite horizon, will only be improved by future decisions. 

This analysis still assumes that the underlying value function used is recursive up to the horizon, in particular, that it is an $\mathcal{F}_t$-consistent nonlinear expectation. This could be weakened to assuming that it is simply a nonlinear evaluation, and with appropriate adaptation of the arguments involved, we can also remove the assumption that interest rates are zero or deterministic. However, if the value function used is not recursive, for example, as with Coherent Value at Risk, these results would not apply. Essentially this is because these value functions introduce different types of time inconsistency, apart from the issues of moving horizons.

Given the extreme uncertainties that may be faced when attempting to model asset dynamics in the very long term, it may be appropriate to use a moving horizon approach. At the same time, if decisions involve commitment beyond the horizon, (and hence the policy space is not closed under truncation, in the sense of Theorem \ref{thm:movinghorizondependable}), consideration of the longer term is necessary.

\bibliographystyle{plain}  
\bibliography{../RiskPapers/General}
\end{document}